\newcommand{\sk}{\medskip}
\newcommand{\floor}[1]{\lfloor #1 \rfloor}
\newcommand{\ceil}[1]{\lceil #1 \rceil}
\newcounter{cnt}
\newtheorem{theorem}[cnt]{Theorem}
\newtheorem{lemma}[cnt]{Lemma}
\newtheorem{prop}[cnt]{Proposition}
\theoremstyle{definition}
\theoremstyle{plain}
\tikzstyle{vertex}=[draw=black,circle,minimum size=1cm]
\tikzstyle{write0}=[-latex,thick,color=red]
\tikzstyle{write1}=[-latex,thick,color=blue]
\tikzstyle{write2}=[-latex,thick,color=green]
\definecolor{turquoise}{rgb}{0.25,0.87,0.81}
\newcounter{a}
\newcounter{b}
\begin{document}

\title{Aperiodic tilings and entropy\footnote{supported by ANR project EMC NT09 555297}}
\author{Bruno Durand\thanks{Universit\'e Montpellier 2, Lirmm 161 rue Ada, 34095 Montpellier cedex 5 France, \url{www.lirmm.fr/\~bdurand}} , Guilhem Gamard\thanks{Lirmm and ENS Paris}, Anael Grandjean\thanks{Lirmm and ENS Lyon}}
\date{}
\maketitle

\begin{abstract}
In this paper we present a construction of Kari-Culik aperiodic tile set — the smallest known until now. With the help of this construction, we prove that this tileset has positive entropy. We also explain why this result was not expected.
\end{abstract}

\section{Introduction}

In this paper we focus on aperiodic tilesets. These tilesets can be used to tile the plane but none of the obtained tilings has a period.  The role of aperiodic tilesets is crucial in different fields such as logics (see for instance~\cite{gurevich}) or for the study of quasi-periodic structures such as quasi-crystals. Furthermore these aperiodic tilesets are a classical tool to prove undecidability problems for planar structures or dynamical systems. We work with the formalization that Wang proposed in~\cite{Wang}.

A classical question about a tileset is to measure its entropy. Roughly speaking the entropy of a tileset is positive if "points of freedom are dense". One can easily make it positive for any aperiodic tileset by a cartesian product with a free bit. The number of tiles is then multiplicated by a factor two but the resulting tiling has positive entropy. It is easy to observe that for classical self similar tilesets such as Berger~\cite{BergerPhD}, Robinson~\cite{Robinson} the entropy is zero. The main question we adress is the entropy of the smallest known aperiodic tileset : it was conjectured that its entropy is zero but we prove it is positive. This entropy zero conjecture comes from other works on this tileset and some algorithmic remarks developped in section~\ref{sec:choices}.

Our paper is organized as follows: first we explain exactly the same tileset as Kari and Culik in ~\cite{Kari,Culik}. Our explanation makes it easier to analyse (repeating the proof of aperiodicity). Then in section~\ref{sec:entropy} we formulate a substitutive property that guarantee positive entropy of Kari-Culik tileset. The rest of the section is devoted to the proof. The last section is focused on more refined approaches to the entropy of a tileset.

\section{Presentation of the tileset}
\subsection{Source of aperiodicity} \label{sec:aperiodic}

Let us start with an observation. Consider a bi-infinite sequence $x_n$ of
positive real numbers, such that either $x_{n+1}=2x_n$ or $x_{n+1}=x_n/3$ for
every $n$.

It is easy to see that every such sequences are aperiodic. Indeed, for
all $n$ and all $k>0$, we have $x_{n+k} = x_n \times 2^i/3^j$ for
some $i,j > 0 $. If we had $x_{n+1} = x_n$, then we would have $1 = 2^i/3^j$
for $i,j > 0$, which is a contradiction.

Moreover, there exist some such sequences $x_n$ which lie in the interval
$[1/3; 2]$. Starting from some $x_0$ in this interval, we can always take
$x_{n+1}=2x_n$ if $x_n < 1$, and $x_{n+1}=x_n/3$ otherwise. The same argument
works in the opposite direction.

\subsection{Aperiodic sequences and tilings} \label{sec:seqs_tilings}

A tile is an unit square with colored sides. Consider the
(geometric) plane with a unit grid; a tiling is an assignation of a tile to
each square of the grid, in such manner that matching borders have the same
color. Thus, in a tiling, we have a bi-infinite sequence of colors along any
horizontal or vertical line of the grid.

We are going to focus on the horizontal lines of our tilings. If we use
three colors (say, $0$, $1$ and $2$) for the top and bottom sides of the
tiles, we will get bi-infinite sequences over the alphabet $\{0,1,2\}$.
Such sequences might have an \textbf{average}, i.e. a limit of averages over
finite parts as the length of the parts increases. 

Our goal is to construct a set of tiles with the following two properties:
\begin{enumerate}
  \item for every tiling, if the averages of all horizontal lines exist, they
    form a sequence $x_n$ with the property defined in
    section~\ref{sec:aperiodic}.
  \item for every such sequence $x_n$, we can find a tiling where averages
    exist and are equal to $x_n$.
\end{enumerate}

This tile set will be aperiodic. If it had a periodic tiling, it would also
have a bi-periodic tiling. In a bi-periodic tiling, all horizontal lines have
an average (due to horizontal periodicity), and form a periodic sequence (due
to vertical periodicity), which is impossible. The existence of tilings is a
consequence of the second claim.

\subsection{Tilings and automata} \label{sec:til_auto}

Our tileset should guarantee that some relation holds between all two
consecutive lines of a tiling (namely, ``$x = 2y$ or $x = y/3$''). Thus, let us
consider a stripe (a horizontal line of tiles), as displayed on
figure~\ref{fig:stripe}. We call the sequence of top numbers $a_n$, bottom
numbers $b_n$, and the matching left and right numbers $q_n$. Such a stripe can
be viewed as a run of a non-deterministic automaton, where $q_n$ are the
traveresed states, and $(a_n,b_n)$ are the input.

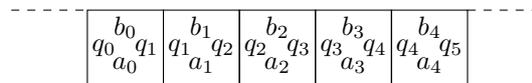
\begin{figure}[h!]
  \centering
  \begin{tikzpicture}
    \def\n{5};
    \def\nm{4};

    \draw[dashed] (-1,0) -- ++(1,0);
    \draw[dashed] (-1,1) -- ++(1,0);
    \draw[dashed] (\n,0) -- ++(1,0);
    \draw[dashed] (\n,1) -- ++(1,0);

    \foreach [count=\j] \i in {0,...,\nm}
    {
      \draw (0.5,0.5) ++ (\i,0) \tile{$q_\i$}{$b_\i$}{$a_\i$}{$q_\j$};
    }
  \end{tikzpicture}
  \caption{Example of a horizontal tiled line.}
  \label{fig:stripe}
\end{figure}

More precisely, each tile $(q',a,q,b)$ correspond to a transition $q
\xrightarrow{(a,b)} q'$, where $(a,b)$ is the input. This is illustrated by
figure~\ref{fig:translation}. Since a tileset only have a finite number of
colors and tiles, the running automaton must be a finite-state automaton
reading pairs of letters. Note that our automaton has no initial state; it runs
infinitely in both directions.

\begin{figure}[h!]
  \centering
  \begin{tikzpicture}[node distance=0.5cm and 0.5cm]
    \draw (0,0) \tile{$q_1$}{$b$}{$a$}{$q_2$};

    \node at (2,0) {$\implies$};

    \node[vertex] (q1) at (4,0) {$q$};
    \node[vertex] (q2) at (9,0) {$q'$};
    \draw[-latex,thick] (q1) to [bend left=35] node[above]{$(a,b)$} (q2);
  \end{tikzpicture}
  \caption{translation of a tile to a transition.}
  \label{fig:translation}
\end{figure}
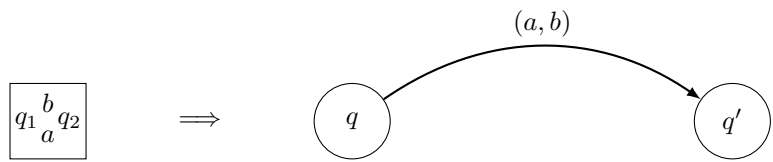

We can see that there exists a bi-infinite run of the automaton on the sequence
\begin{equation*}
  \dots (a_{-2}, b_{-2}), (a_{-1}, b_{-1}), (a_0, b_0), (a_1, b_1),
  (a_2, b_2) \dots
\end{equation*}
if and only if there exists a tile horizontal stripe that carries the sequence
$\dots a_{-2}, a_{-1}, a_0, a_1, a_2 \dots$ on the bottom and $\dots b_{-2},
b_{-1}, b_0, b_1, b_2 \dots$ on the top.

If we try to extract a set of tiles from several automata, and take the union
of the results, we will get a tileset which performs a run of one of the
automata on each line. We have to ensure that the set of states of the several
automata are disjoint, which guarantees that automata will never be mixed
within a single line.

\subsection{Construction of actual automata} \label{sec:auto}

Let us construct a finite-state automaton which reads sequences of couples
$(a_n, b_n)$ and checks if $| \sum_i b_i - 2 \, \sum_i a_i|$ is bounded;
and another one which checks if $|3 \, \sum_i b_i - \sum_i a_i|$ is bounded.
The sequences $a_n$ and $b_n$ are on an alphabet of two integers, for instance
$a_n$ is on $\{0,1\}$ and $b_n$ on $\{1,2\}$.

These automata are constructed the following way: fix a set of states $Q$,
and have all transitions $q \xrightarrow{(a,b)} q'$ to satisfy the following
relation:
\begin{align*}
  q' &= q + a - 2b \text{ (automaton for $a = 2b$)} \\
  q' &= q + 3a - b \text{ (automaton for $a = b/3$)}
\end{align*}
Thus, the automata will compute the cumulative sum of $a_n$ and the cumulative
sum of $2 b_n$ (resp. $b_n / 3$), and hold the difference into its current
state. Since the number of states is finite, the difference must be bounded.
As a consequence, if a couple of sequences $(a_n, b_n)$ is accepted by the
first (resp. second) automaton and $a_n$ have an average, then $b_n$ have an
average which is twice (resp. one third of) $a_n$'s one.

It only remains to set the alphabet for the $a_n$ and $b_n$ sequences. These
alphabets are directly connected to the allowed range for averages of $a_n$ and
$b_n$. For instance, if $a_n$ is on alphabet $\{1,2\}$, its average can be any
real number between $1$ and $2$. Likewise, we have to set an alphabet for
$b_n$. As an additional restriction, we can make automata in such manner that
they reject some finite patterns, like $000$. Sequences on alphabet $\{0,1\}$
without any pattern ``$000$'' cannot have an average lesser than $0.25$. Using
this fact, we can restrict allowed ranges for averages in a more precise way.

If one tries to build such automaton for range $[1/3; 2]$ with alphabet
$\{0,1,2\}$, he gets the automata displayed on figure~\ref{fig:automata}.

\begin{figure}[h!]
  \centering

  \begin{subfigure}{0.49\textwidth}
    \centering
    \begin{tikzpicture}
      \node[vertex] (0/3) at (0,0) {$0/3$};
      \node[vertex] (1/3) at (2,0) {$1/3$};
      \node[vertex] (2/3) at (4,0) {$2/3$};

      \draw[-latex,thick] (0/3) to[bend left=25] node[above] {$(1,0)$} (1/3);
      \draw[-latex,thick] (1/3) to[bend left=25] node[above] {$(1,0)$} (2/3);
      \draw[-latex,thick] (2/3.north) to[bend right=65] node[above] {$(1,1)$}
                          (0/3.north);

      \draw[-latex,thick] (1/3) to[bend left=25] node[below] {$(2,1)$} (0/3);
      \draw[-latex,thick] (2/3) to[bend left=25] node[below] {$(2,1)$} (1/3);
      \draw[-latex,thick] (0/3.south) to[bend right=65] node[above] {$(2,0)$}
                          (2/3.south);
    \end{tikzpicture}
  \end{subfigure}
  \hfill
  \begin{subfigure}{0.49\textwidth}
    \centering
    \begin{tikzpicture}
      \node[vertex] (0) at (0,0) {$0$};
      \node[vertex] (1) at (2,0) {$1$};

      \draw[-latex,thick] (0) to[loop above] node[above]{$(0,0)$} (0);
      \draw[-latex,thick] (0) to[loop below] node[below]{$(1,2)$} (0);
      \draw[-latex,thick] (1) to[loop above] node[above]{$(0,0)$} (1);
      \draw[-latex,thick] (1) to[loop below] node[below]{$(1,2)$} (1);
      \draw[-latex,thick] (0) to[bend left=25] node[above]{$(1,1)$} (1);
      \draw[-latex,thick] (1) to[bend left=25] node[below]{$(0,1)$} (0);
    \end{tikzpicture}
  \end{subfigure}
  \caption{automata $M_{1/3}$ and $M_2$.}
  \label{fig:automata}
\end{figure}
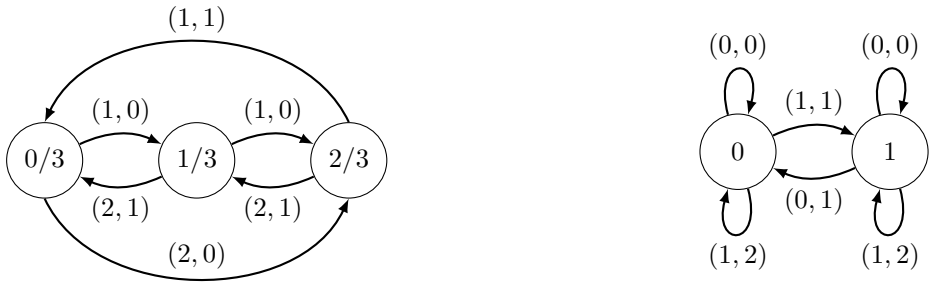

It is easy to check that these automata have $12$ transitions, yielding an
aperiodic set of $12$ tiles.

\subsection{Existence of a Tiling}

Remark that if we want to tile the whole plane, our automata cannot be fed with
any sequences, even if those sequences have averages. Indeed, it is easy to see
that when the $M_2$ automata is fed with the sequence ``$a_n=0011$'', the
sequence $b_n$ must contain both a $0$ and a $2$ to be accepted. However,
automata $M_{1/3}$ only accepts sequences over $\{1,2\}$ for $a_n$, and $M_2$
only accepts $\{0,1\}$. Thus, if a stripe has $0011$ on its bottom line, then
it has both a $0$ and a $2$ on its top line. The next stripe cannot be a run of
$M_2$ nor $M_{1/3}$, and the tiling does not go to infinity.

As a consequence, we need to show that there exists sequences of average $x$,
for each positive real $x$, which are accepted by our automata. In order to
achieve this, we will use Sturmian sequences. Define:
\begin{equation*}
  B_x(k) = \floor{x(k+1)} - \floor{xk}
\end{equation*}
$B_x$ is the \textbf{Sturmian sequence} of slope $x$, and $B_x(k)$ is its
$k^\text{th}$ letter. This sequence is bi-infinite over alphabet $\{\floor{x},
\ceil{x}\}$. Since the sum over $k$ of $B_x(k)$ is telescopic, it is easy to
calculate the average of this sequence and check it is actually $x$.

Let us think a bit about Sturmian sequences. Fix a real number $x$. Imagine you
are on an infinite, measured line, and you are making jumps of length $x$ along
the line. Whenever you make a jump, write down the number of integers you
jumped over: this is the Sturmian sequence of slope $x$. 

We can get the Sturmian sequence of $2x$ by making jumps of length $x$, and
counting the number of multiples of $0.5$ we jumped over. There are twice more
multiples of $0.5$ than integers; thus, in the long run, we actually get the
Sturmian sequence of $2x$. This idea is illustrated on
figure~\ref{fig:jumps:2}.

If we want to get the Sturmian sequence of $x/3$ from the sequence of $x$, we
have to consider multiples of $3$. They actually occur three times less often
than integers. This is illustrated on figure~\ref{fig:jumps:3}.

\begin{figure}[h!]
  \begin{subfigure}{1.0\textwidth}
    \centering
    \begin{tikzpicture}[scale=0.9]
      \def\n{15};
      \setcounter{a}{\n-1}

      % Draw the grid
      \draw[thick] (0,0) -- ++(\n,0);
      \foreach \i in {0,...,\n} {
        \draw[thick] (\i,0) -- ++(0,0.5);
      }
      \foreach \i in {0,2,...,\n} {
        \draw[thick] (\i,0) -- ++(0,0.75);
      }
      % Draw the graduations
      \foreach \i in {0,2,...,\thea} {
        \draw (\i,0) ++ (0.5,0) node[above] {1};
      }
      \foreach \i in {1,3,...,\thea} {
        \draw (\i,0) ++ (0.5,0) node[above] {0}; 
      }
      % Draw the numbers for graduations
      \foreach \i in {0,2,...,\thea} {
          \setcounter{b}{\i/2}
          \draw (\i,0) node[below] {$\theb$}
              ++(1,0)  node[below] {$\theb.5$};
      }

      % Little arrows
      \begin{scope}[yshift=0.5cm]
        \draw[-latex,thick] (0.0,0) to[bend left=45]
          node[above] {$(1)$} ++(0.8,0);
        \draw[-latex,thick] (0.8,0) to[bend left=45]
          node[above] {$(2)$} ++(0.8,0);
        \draw[-latex,thick] (1.6,0) to[bend left=45]
          node[above] {$(3)$} ++(0.8,0);
        \draw[-latex,thick] (2.4,0) to[bend left=45] ++(0.8,0);
        \draw               (3.2,0) node[right] {$\dots$};
        \draw[-latex,thick] (5.1,0) to[bend left=45]
          node[above] {$(4)$} ++(0.8,0);
        
        % Big arrows
        \draw[-latex,thick] (6.0,0) to[bend left=45]
          node[above] {$(2')$} ++(1.3,0);
        \draw[-latex,thick] (7.3,0) to[bend left=45]
          node[above] {$(3')$} ++(1.3,0);
        \draw[-latex,thick] (8.6,0) to[bend left=45] ++(1.3,0);
        \draw[-latex,thick] (9.9,0) to[bend left=45]
          node[above] {$(4')$} ++(1.3,0);
        \draw              (11.2,0) node[right] {$\dots$};
        \draw[-latex,thick] (12.9,0) to[bend left=45]
          node[above] {$(1')$} ++(1.3,0);
      \end{scope}
    \end{tikzpicture}
    \caption{Multiplication by $2$: eight types of transitions,
      but $(2)=(2')$ and $(3)=(3')$, yielding six distinct types.}
    \label{fig:jumps:2}
  \end{subfigure}

  \sk

  \begin{subfigure}{1.0\textwidth}
    \centering
    \begin{tikzpicture}[scale=0.9]
      \def\n{15}

      % Draw grid
      \draw[thick] (0,0) -- ++(\n,0);
      \foreach \i in {0,...,\n} {
        \draw[thick] (\i,0) -- ++(0,0.5);
      }
      \foreach \i in {0,3,...,\n} {
        \draw[thick] (\i,0) -- ++(0,0.75);
      }
      %Draw graduations
      \setcounter{a}{\n-1}
      \foreach \i in {0,3,...,\thea} {
          \setcounter{b}{\i/3}
          \draw (\i,0) node[below]{$\theb$}
              ++(1,0)  node[below]{$\theb.33$}
              ++(1,0)  node[below]{$\theb.66$};
      }
      \draw (\n,0) node[below]{$\dots$};

      % Draw states letters
      \setcounter{a}{\n-1}
      \foreach \i in {0,3,...,\thea} {
        \draw (\i,0) ++(0.5,0) node[above]{$\frac{2}{3}$};
      }
      \setcounter{a}{\n-1}
      \foreach \i in {1,4,...,\thea} {
        \draw (\i,0) ++(0.5,0) node[above]{$\frac{1}{3}$};
      }
      \setcounter{a}{\n-1}
      \foreach \i in {2,5,...,\thea} {
        \draw (\i,0) ++(0.5,0) node[above]{$\frac{0}{3}$};
      }

      % Draw jumps arrows
      \draw[-latex,thick] (0,0.75) to[bend left=35] 
        node[above]{$(a)$} ++(3.6,0);
      \draw[-latex,thick] (3.6,0.75) to[bend left=35]
        node[above]{$(b)$} ++(3.6,0);
      \draw[-latex,thick] (7.2,0.75) to[bend left=35]
        node[above]{$(c)$} ++(3.6,0);
      \draw[-latex,thick] (10.8,0.75) to[bend left=35]
        node[above]{$(d)$} ++(3.6,0);
    \end{tikzpicture}

    \begin{tikzpicture}[scale=0.9]
      \def\n{15}

      % Draw grid
      \draw[thick] (0,0) -- ++(\n,0);
      \foreach \i in {0,...,\n} {
        \draw[thick] (\i,0) -- ++(0,0.5);
      }
      \foreach \i in {0,3,...,\n} {
        \draw[thick] (\i,0) -- ++(0,0.75);
      }
      % Draw states letters
      \setcounter{a}{\n-1}
      \foreach \i in {0,3,...,\thea} {
        \draw (\i,0) ++(0.5,0) node[above]{$\frac{2}{3}$};
      }
      \setcounter{a}{\n-1}
      \foreach \i in {1,4,...,\thea} {
        \draw (\i,0) ++(0.5,0) node[above]{$\frac{1}{3}$};
      }
      \setcounter{a}{\n-1}
      \foreach \i in {2,5,...,\thea} {
        \draw (\i,0) ++(0.5,0) node[above]{$\frac{0}{3}$};
      }

      % Draw jumps arrows
      \draw[-latex,thick] (2.1,0.75) to[bend left=35]
        node[above]{$(e)$} ++(3.6,0);
      \draw[-latex,thick] (5.9,0.75) to[bend left=35]
        node[above]{$(f)$} ++(3.6,0);
    \end{tikzpicture}
    \caption{Division by $3$: six types of transitions.}
    \label{fig:jumps:3}
  \end{subfigure}

  \caption{multiplications of Sturmian sequences.}
  \label{fig:jumps}
\end{figure}
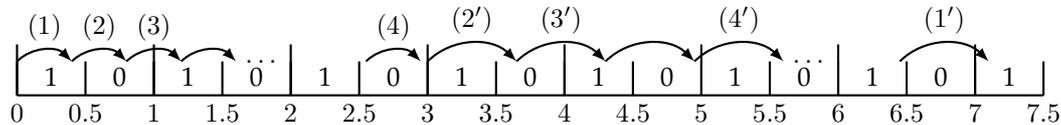
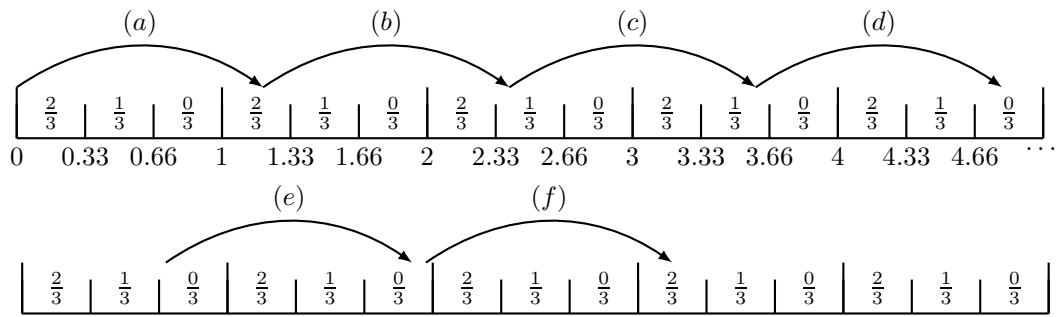

Note that, in figure~\ref{fig:jumps:2}, jumping over a non-integer
multiple of $0.5$ (the ``small obstacles'') increments the difference between
``twice number of integers jumped'' and ``number of multiples of $0.5$
jumped'' by $1$. By contrast, jumping over an integer (big obstacles)
decrements this difference by $1$. Since we want it as close to $0$ as
possible, two states are enough (before and after the small obstacle).

This works the same for figure~\ref{fig:jumps:3}. Jumping over
``small obstacles'' increments the difference between ``one third of
the obstacles jumped'' and ``number of big obstacles jumped''. Jumping
over ``big obstacles'' decrements this difference by $2$. As a conclusion,
only $3$ states are needed.

One can finally check that all possible types of jumps are displayed on
figure~\ref{fig:jumps}, and that each of them corresponds to a transition of
our automata (figure~\ref{fig:automata}). For instance, type $(1)$ corresponds
to $1 \xrightarrow{(0,0)} 1$, and $(2)$ corresponds to $1 \xrightarrow{(0,1)}
0$. More generally, in $M_2$, $a_n$ corresponds to the number of jumped
obstacles (of any size) by an arrow, and $b_n$ corresponds to the number of
jumped big obstacles. In $M_{1/3}$, it is permuted: $a_n$ is the number of
jumped big obstacles, and $M_2$ the number of jumped obstacles.

As a conclusion, $(B_x, B_{2x})$ is always accepted by $M_2$ and
$(B_{3x}, B_x)$ is always accepted by $M_{1/3}$. Thus one can take any
sequence $x_n$ from section~\ref{sec:aperiodic}, write the Sturmian
sequence of $x_n$ on line $n$ of the tiling, and get valid runs for
automata. Thus one get valid tilings.

\subsection{Aperiodicity}

This construction ensures that each tiling corresponds to a specific sequence which is identically null or aperiodic. Then we just have to avoid this null sequence. Culik presented one way to achieve that in~\cite{Culik}. The idea is to forbid three consecutive uses of the $M_2$ automaton. This can be done by adding only one tile. Consider a new color $0'$ which value is $0$ in the average, such that above a $0$ there can be either a $1$ or a $0'$, and above a $0'$ there always is a $1$.  Thus there cannot be three consecutive $not-1$ in a row, ensuring that the all zero configuration is forbidden.
All tilings with this tileset are aperiodic. This tileset is displayed on Figure~\ref{fig:tileset}

\begin{figure}
    \centering
    \begin{tikzpicture}
      \node[vertex] (0/3) at (0,0) {$\frac{0}{3}$};
      \node[vertex] (1/3) at (2,0) {$\frac{1}{3}$};
      \node[vertex] (2/3) at (4,0) {$\frac{2}{3}$};

      \foreach \i in {0,...,1}{
        \setcounter{a}{\i+1}
        \draw[-latex,thick] (\i/3) to[bend left=25] node[above] {$(2,1)$} (\thea/3);
        \draw[-latex,thick] (\thea/3) to[bend left=25] node[below] {$(1,0)$} (\i/3);
      }

      \draw[-latex,thick]  (0/3) to[bend right=65] node[below] {$(1,1)$}  (2/3);
      \draw[-latex,thick]  (2/3) to[bend right=65] node[above] {$(2,0)$}  (0/3);

      \begin{scope}[xshift= 6cm]
        \node[vertex] (0) at (0,0) {$0$};
        \node[vertex] (1) at (3,0) {$1$};

        \draw[-latex,thick] (0) to[loop above] node[above]{$(0,0')$} (0);
        \draw[-latex,thick] (1) to[loop above] node[above]{$(0,0')$} (1);

        \draw[-latex,thick] (0) to[loop below] node[below]{$(1,2)$} (0);
        \draw[-latex,thick] (1) to[loop below] node[below]{$(1,2)$} (1);

        \draw[-latex,thick] (0) to[bend left=30] node[above]{$(1,1)$} (1);
        \draw[-latex,thick] (1) to[bend left=30] node[below]{$(0',1)$} (0);
	\draw[-latex,thick] (1) to[bend left=0] node[below]{$(0,1)$} (0);
      \end{scope}

\end{tikzpicture}
\caption{Kari + Culik automata}
\end{figure}
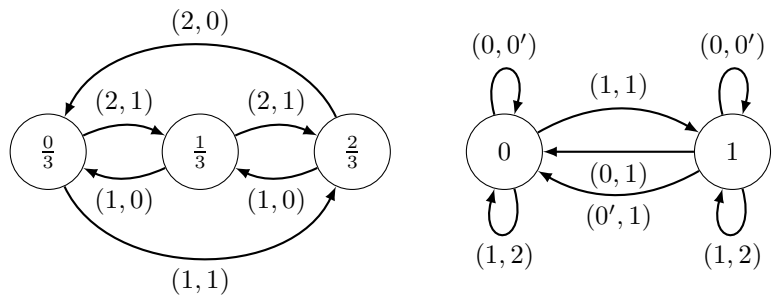

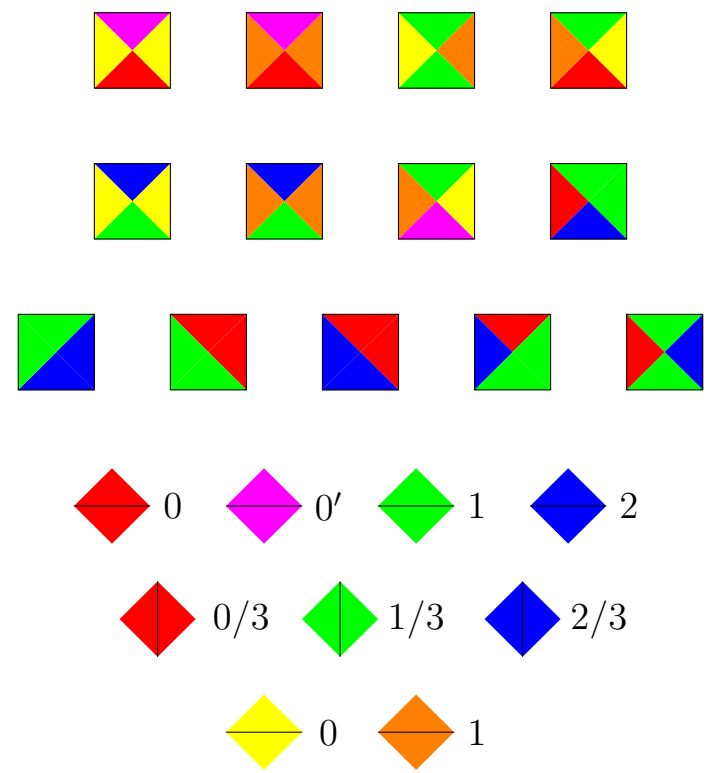
\begin{figure}[!]
\label{fig:tileset}
    \centering
    \begin{tikzpicture}
      \definecolor{letter0}{rgb}{1.0,0.0,0.0}
      \definecolor{letter1}{rgb}{0.0,1.0,0.0}
      \definecolor{letter2}{rgb}{0.0,0.0,1.0}
      \definecolor{letter0'}{rgb}{1.0,0.0,1.0}

      \definecolor{state03}{rgb}{1.0,0.0,0.0}
      \definecolor{state13}{rgb}{0.0,1.0,0.0}
      \definecolor{state23}{rgb}{0.0,0.0,1.0}

      \definecolor{state01}{rgb}{1.0,1.0,0.0}
      \definecolor{state11}{rgb}{1.0,0.5,0.0}

    %                 TOP      BOTTOM   RIGHT    LEFT
      \newwangstyle{t1}{letter0'}{letter0}{state01}{state01};
      \newwangstyle{t2}{letter0'}{letter0}{state11}{state11};
      \newwangstyle{t3}{letter1}{letter1}{state11}{state01};
      \newwangstyle{t4}{letter1}{letter0}{state01}{state11};
      \newwangstyle{t5}{letter2}{letter1}{state01}{state01};
      \newwangstyle{t6}{letter2}{letter1}{state11}{state11};
      \newwangstyle{t7}{letter1}{letter0'}{state01}{state11};

      \newwangstyle{t13}{letter1}{letter1}{state23}{state03};
      \newwangstyle{t8}{letter1}{letter2}{state13}{state03};
      \newwangstyle{t9}{letter1}{letter2}{state23}{state13};
      \newwangstyle{t10}{letter0}{letter1}{state03}{state13};
      \newwangstyle{t11}{letter0}{letter2}{state03}{state23};
      \newwangstyle{t12}{letter0}{letter1}{state13}{state23};

\draw (0,4) node[t1]{};
\draw (2,4) node[t2]{};
\draw (4,4) node[t3]{};
\draw (6,4) node[t4]{};

\draw (0,2) node[t5]{};
\draw (2,2) node[t6]{};
\draw (4,2) node[t7]{};
\draw (6,2) node[t8]{};

\draw (-1,0) node[t9]{};
\draw (1,0) node[t10]{};
\draw (3,0) node[t11]{};
\draw (5,0) node[t12]{};
\draw (7,0) node[t13]{};

\end{tikzpicture}

\vspace{1.0cm}

\begin{tikzpicture}
\definecolor{letter0}{rgb}{1.0,0.0,0.0}
      \definecolor{letter1}{rgb}{0.0,1.0,0.0}
      \definecolor{letter2}{rgb}{0.0,0.0,1.0}
      \definecolor{letter0'}{rgb}{1.0,0.0,1.0}

      \definecolor{state03}{rgb}{1.0,0.0,0.0}
      \definecolor{state13}{rgb}{0.0,1.0,0.0}
      \definecolor{state23}{rgb}{0.0,0.0,1.0}

      \definecolor{state01}{rgb}{1.0,1.0,0.0}
      \definecolor{state11}{rgb}{1.0,0.5,0.0}

\fill[letter0] (-2,0) -- (-1.5,0.5) -- (-1,0) -- (-1.5,-0.5) -- (-2,0);
\draw (-2,0) -- (-1,0);
\draw (-0.7, 0) node{\Large{$0$}};

\fill[letter0'] (0,0) -- (0.5,0.5) -- (1,0) -- (0.5,-0.5) -- (0,0);
\draw (0,0) -- (1,0);
\draw (1.35, 0) node{\Large{$0'$}};

\fill[letter1] (2,0) -- (2.5,0.5) -- (3,0) -- (2.5,-0.5) -- (2,0);
\draw (2,0) -- (3,0);
\draw (3.3, 0) node{\Large{$1$}};

\fill[letter2] (4,0) -- (4.5,0.5) -- (5,0) -- (4.5,-0.5) -- (4,0);
\draw (4,0) -- (5,0);
\draw (5.3, 0) node{\Large{$2$}};

\fill[state03] (-1.4,-1.5) -- (-0.9,-1) -- (-0.4,-1.5) -- (-0.9,-2) -- (-1.4,-1.5);
\draw (-0.9,-1) -- (-0.9,-2);
\draw (0.2, -1.5) node{\Large{$0/3$}};

\fill[state13] (1,-1.5) -- (1.5,-1) -- (2,-1.5) -- (1.5,-2) -- (1,-1.5);
\draw (1.5,-1) -- (1.5,-2);
\draw (2.5, -1.5) node{\Large{$1/3$}};

\fill[state23] (3.4,-1.5) -- (3.9,-1) -- (4.4,-1.5) -- (3.9,-2) -- (3.4,-1.5);
\draw (3.9,-1) -- (3.9,-2);
\draw (4.9, -1.5) node{\Large{$2/3$}};

\fill[state01] (0,-3) -- (0.5,-2.5) -- (1,-3) -- (0.5,-3.5) -- (0,-3);
\draw (0,-3) -- (1,-3);
\draw (1.35, -3) node{\Large{$0$}};

\fill[state11] (2,-3) -- (2.5,-2.5) -- (3,-3) -- (2.5,-3.5) -- (2,-3);
\draw (2,-3) -- (3,-3);
\draw (3.3, -3) node{\Large{$1$}};
\end{tikzpicture} 

\caption{Kari + Culik Tileset and colors meaning}
\label{fig:tileset}
\end{figure}

\section{Positive entropy} \label{sec:entropy}

\subsection{Introduction}

Let $S$ be a palette and $C_\mathcal{S}(n)$ the number of different patterns of size $n \times n$ which appear in a tiling. 
Then the \emph{entropy} of $\mathcal{S}$ is defined by $ H(\mathcal{S}) = \displaystyle \lim_{n} \frac{\log C_\mathcal{S}(n)}{n^2}$ (the limit always exists).

The question we adress is wether the Kari-Culik tileset has positive entropy. Ausual method for proving such a fact is to exhibit a substitutive pair : 
A \emph{substitutive} pair is a couple of different patterns with the same borders.

Our method is a small variant : we prove that in our tileset we have two substitutive pair and for each sufficiently large square one of the pair items appears.

Our subsitutive pairs : 

\begin{center}

\vspace{0.75 cm}

  \begin{tikzpicture}
      \definecolor{letter0}{rgb}{1.0,0.0,0.0}
      \definecolor{letter1}{rgb}{0.0,1.0,0.0}
      \definecolor{letter2}{rgb}{0.0,0.0,1.0}

      \definecolor{statep3}{rgb}{1.0,0.0,1.0}
      \definecolor{state03}{rgb}{1.0,0.0,0.0}
      \definecolor{state13}{rgb}{0.0,1.0,0.0}
      \definecolor{state23}{rgb}{0.0,0.0,1.0}

      \definecolor{state01}{rgb}{1.0,1.0,0.0}
      \definecolor{state11}{rgb}{1.0,0.5,0.0}

    %                 TOP      BOTTOM   RIGHT    LEFT
      \newwangstyle{t1}{letter2}{letter1}{state01}{state01};
      \newwangstyle{t2}{letter0}{letter2}{state03}{state23};
      \newwangstyle{t3}{letter1}{letter1}{state23}{state03};
      \newwangstyle{t4}{letter1}{letter1}{state11}{state01};

      \newwangstyle{t5}{letter0}{letter1}{state13}{state23};
      \newwangstyle{t6}{letter1}{letter2}{state23}{state13};
      \newwangstyle{t7}{letter2}{letter1}{state11}{state11};
      \newwangstyle{t8}{letter2}{letter2}{state13}{statep3};
      \newwangstyle{t9}{letter1}{letter1}{state23}{state13};
      \newwangstyle{t10}{letter1}{letter1}{state13}{state03};
      \newwangstyle{t11}{letter1}{letter1}{state03}{statep3};
      \newwangstyle{t12}{letter2}{letter1}{statep3}{state03};
      \newwangstyle{t13}{letter2}{letter1}{state03}{state13};
      \newwangstyle{t14}{letter2}{letter1}{state13}{state23};

    \begin{scope}
\draw (-1.5,0.5) node{\Large{$A_1$ =}}; 
\draw (0,0) node[t1,opacity = 0.5]{} ;
\draw (0,1) node[t2,opacity = 0.5]{} ;
\draw (1,1) node[t3,opacity = 0.5]{} ;
\draw (1,0) node[t4,opacity = 0.5]{} ;
      \draw (0,0)  \tile{$0$}{$2$}{$1$}{$0$};
      \draw (1,0) \tile{$0$}{$1$}{$1$}{$1$};
      \draw (0,1) \tile{$\frac{2}{3}$}{$0$}{$2$}{$\frac{0}{3}$};
      \draw (1,1) \tile{$\frac{0}{3}$}{$1$}{$1$}{$\frac{2}{3}$};
    \end{scope}
    \begin{scope}[xshift=4.5cm]
\draw (-1.5,0.5) node{\Large{$A'_1$ =}}; 
\draw (0,0) node[t4,opacity = 0.5]{} ;
\draw (0,1) node[t5,opacity = 0.5]{} ;
\draw (1,1) node[t6,opacity = 0.5]{} ;
\draw (1,0) node[t7,opacity = 0.5]{} ;
      \draw (0,0) \tile{$0$}{$1$}{$1$}{$1$};
      \draw (1,0) \tile{$1$}{$2$}{$1$}{$1$};
      \draw (0,1) \tile{$\frac{2}{3}$}{$0$}{$1$}{$\frac{1}{3}$};
      \draw (1,1) \tile{$\frac{1}{3}$}{$1$}{$2$}{$\frac{2}{3}$};
    \end{scope}
  \end{tikzpicture}

  \vspace{0.75cm}

  \begin{tikzpicture}
      \definecolor{letter0}{rgb}{1.0,0.0,0.0}
      \definecolor{letter1}{rgb}{0.0,1.0,0.0}
      \definecolor{letter2}{rgb}{0.0,0.0,1.0}

      \definecolor{statep3}{rgb}{1.0,0.0,1.0}
      \definecolor{state03}{rgb}{1.0,0.0,0.0}
      \definecolor{state13}{rgb}{0.0,1.0,0.0}
      \definecolor{state23}{rgb}{0.0,0.0,1.0}

      \definecolor{state01}{rgb}{1.0,1.0,0.0}
      \definecolor{state11}{rgb}{1.0,0.5,0.0}

    %                 TOP      BOTTOM   RIGHT    LEFT
       \newwangstyle{t1}{letter2}{letter1}{state01}{state01};
      \newwangstyle{t2}{letter1}{letter2}{state23}{state13};
      \newwangstyle{t3}{letter1}{letter1}{state13}{state23};
      \newwangstyle{t4}{letter1}{letter1}{state11}{state01};

      \newwangstyle{t5}{letter0}{letter1}{state03}{state13};
      \newwangstyle{t6}{letter1}{letter2}{state13}{state03};
      \newwangstyle{t7}{letter2}{letter1}{state11}{state11};
      \newwangstyle{t8}{letter0}{letter1}{state11}{state01};
      \newwangstyle{t9}{letter1}{letter1}{state23}{state03};
      \newwangstyle{t10}{letter0}{letter2}{state03}{state23};
      \newwangstyle{t11}{letter1}{letter1}{state03}{statep3};
      \newwangstyle{t12}{letter2}{letter1}{statep3}{state03};
      \newwangstyle{t13}{letter2}{letter1}{state03}{state13};
      \newwangstyle{t14}{letter2}{letter1}{state13}{state23};

    \begin{scope}
\draw (-1.5,0.5) node{\Large{$A_2$ =}}; 
\draw (0,0) node[t1,opacity = 0.5]{} ;
\draw (0,1) node[t6,opacity = 0.5]{} ;
\draw (1,1) node[t5,opacity = 0.5]{} ;
\draw (1,0) node[t4,opacity = 0.5]{} ;
      \draw (0,0) \tile{$0$}{$2$}{$1$}{$0$};
      \draw (1,0) \tile{$0$}{$1$}{$1$}{$1$};
      \draw (0,1) \tile{$\frac{0}{3}$}{$1$}{$2$}{$\frac{1}{3}$};
      \draw (1,1) \tile{$\frac{1}{3}$}{$0$}{$1$}{$\frac{0}{3}$};
    \end{scope}
    \begin{scope}[xshift=4.5cm]
\draw (-1.5,0.5) node{\Large{$A'_2$ =}}; 
\draw (0,0) node[t4,opacity = 0.5]{} ;
\draw (0,1) node[t9,opacity = 0.5]{} ;
\draw (1,1) node[t10,opacity = 0.5]{} ;
\draw (1,0) node[t7,opacity = 0.5]{} ;
      \draw (0,0) \tile{$0$}{$1$}{$1$}{$1$};
      \draw (1,0) \tile{$1$}{$2$}{$1$}{$1$};
      \draw (0,1) \tile{$\frac{0}{3}$}{$1$}{$1$}{$\frac{2}{3}$};
      \draw (1,1) \tile{$\frac{2}{3}$}{$0$}{$2$}{$\frac{0}{3}$};
    \end{scope}
  \end{tikzpicture}
\end{center}
\subsection{Coming back to the function}

Recall that our function is $f : [\frac{1}{3};2] \mapsto [\frac{1}{3};2]$ such that 
\begin{equation*}
  f(x) = \begin{cases}
    2x &\text{ if } x \in [\frac 1 3; 1] \\
    \frac 1 3 x &\text{ if } x \in [1; 2]
  \end{cases}
\end{equation*}
\begin{lemma}
The orbits of $f$ are dense.
\end{lemma}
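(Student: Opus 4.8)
\emph{Proof strategy.} The plan is to conjugate $f$ to an irrational rotation of the circle via the logarithmic change of coordinates already used informally in Section~\ref{sec:aperiodic}, and then invoke the classical fact that every orbit of an irrational rotation is dense. Set $\alpha = \log_6 2$ and define $\pi : [\tfrac13, 2] \to \bigR/\bigZ$ by $\pi(x) = \log_6 x \bmod 1$. Since $\log_6$ is an increasing bijection from $[\tfrac13,2]$ onto $[-\log_6 3,\ \log_6 2]$, an interval of length $\log_6 2 + \log_6 3 = 1$, the map $\pi$ is continuous and surjective, and injective except that $\pi(\tfrac13) = \pi(2)$ (the endpoints get glued, because $-\log_6 3 \equiv \log_6 2 \pmod 1$). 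Moreover $\pi$ conjugates $f$ to the rotation $R_\alpha : t \mapsto t + \alpha$: if $x \le 1$ then $\pi(f(x)) = \log_6(2x) = \pi(x) + \log_6 2 = \pi(x) + \alpha$, and if $x \ge 1$ then $\pi(f(x)) = \log_6(x/3) = \pi(x) - \log_6 3 \equiv \pi(x) + \log_6 2 \pmod 1$, again because $\log_6 2 + \log_6 3 = 1$. (At $x = 1$ the two branches of $f$ disagree, but both candidate values $2$ and $\tfrac13$ have the same $\pi$-image, so the ambiguity is harmless.) Hence $\pi \circ f = R_\alpha \circ \pi$.

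Next, $\alpha = \log_6 2$ is irrational: $\log_6 2 = p/q$ with $p,q \in \bigZ$ would give $2^q = 6^p = 2^p 3^p$, hence $2^{q-p} = 3^p$, forcing $p = 0$ and then $2 = 1$. By the classical pigeonhole argument — a rotation is an isometry of $\bigR/\bigZ$, and an irrational rotation has no periodic point, so for every $\varepsilon > 0$ some iterate of $R_\alpha$ is a nontrivial rotation by less than $\varepsilon$ — every $R_\alpha$-orbit is dense in $\bigR/\bigZ$.

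It remains to transport density back through $\pi$. Fix $x_0 \in [\tfrac13,2]$ and a nonempty open set $U \subseteq [\tfrac13,2]$; choose a nonempty open subinterval $V \subseteq U \cap (\tfrac13, 2)$. On the open interval $(\tfrac13,2)$ the map $\pi$ is a homeomorphism onto $\bigR/\bigZ \setminus \{\pi(\tfrac13)\}$ (the covering map $\bigR \to \bigR/\bigZ$ restricted to an open interval of length exactly $1$ is an injective open map), so $\pi(V)$ is open in $\bigR/\bigZ$, avoids the glued point $\pi(\tfrac13) = \pi(2)$, and satisfies $\pi^{-1}(\pi(V)) = V$. Since $\{R_\alpha^n(\pi(x_0)) : n \ge 0\} = \{\pi(f^n(x_0)) : n \ge 0\}$ is dense, there is $n$ with $\pi(f^n(x_0)) \in \pi(V)$, whence $f^n(x_0) \in \pi^{-1}(\pi(V)) = V \subseteq U$. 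Therefore every orbit of $f$ is dense.

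The substantive content lies in the first two paragraphs: the observation that the two branches of $f$ collapse to the \emph{same} rotation because $\log_6 2 + \log_6 3 = 1$, together with the irrationality of $\alpha$. I expect no real difficulty, but the step requiring care is the last one: $\pi$ is not a homeomorphism of $[\tfrac13,2]$ onto $\bigR/\bigZ$ (the endpoints are identified, and $f$ is not injective at $1$, $\tfrac13$, $2$), so one must restrict to the open interior in order to pull density back cleanly. That bookkeeping, rather than any genuine obstacle, is the main thing to get right.
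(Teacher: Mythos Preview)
Your proof is correct and follows essentially the same approach as the paper: conjugate $f$ to the rotation by $\frac{\log 2}{\log 2 + \log 3}$ on the unit circle via a logarithmic change of variable (your $\pi(x)=\log_6 x$ differs from the paper's $\phi(x)=\frac{\log x+\log 3}{\log 2+\log 3}$ only by the additive constant $\log_6 3$), and invoke density of irrational rotation orbits. In fact you are more careful than the paper, which neither verifies the irrationality of the angle nor addresses the endpoint identification when pulling density back through the conjugacy.
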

\begin{proof}
 It is well-known that irrational rotations on the circle have dense orbits. Thus, we map the interval $[\frac{1}{3}; 2]$ on the unit circle in such manner that the function $f$ corresponds to a rotation of irrational angle.

  We consider the following mapping:
  \begin{align*}
    \phi &: [\frac{1}{3};2] \to [0;1] \\
    \phi(x) &= \frac{\log(x) + \log 3}{\log(2) + \log 3}
    \mod 1
  \end{align*}
  We view the interval $[0;1]$ as a circle by identifying point $0$ with
  point $1$.
  \begin{align}
    \phi(2x) = \frac{ \log(2) + \log(x) + \log 3 }{
    \log(2) + \log 3} \mod 1 
    &= \phi(x) + \frac{\log(2)}{\log(2) + \log 3} \mod 1
\\
    \phi(\frac{x}{3}) = \frac{\log(x)}{\log(2) + \log 3}
    \mod 1 
  &= \phi(x) + \frac{\log(2)}{\log(2) + \log 3} \mod 1
\end{align}

  Both transitions map to the same irrational rotation of angle $\frac{\log 2}{\log 2 + \log 3}$. 
\end{proof}

\begin{prop}
\label{prop:dense2}
Given any interval the maximal number of iterations of $f$ between two occurences in this interval is bounded.   
\end{prop}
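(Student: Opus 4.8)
The plan is to exploit the conjugacy already exhibited in the previous lemma. Via $\phi$, the map $f$ on $[\tfrac13;2]$ becomes the rotation $R_\alpha:\theta\mapsto\theta+\alpha$ on the circle $\mathbb{R}/\mathbb{Z}$, where $\alpha=\tfrac{\log 2}{\log 2+\log 3}$ is irrational. One checks that $\phi$ is an increasing homeomorphism of $[\tfrac13;2]$ (with its endpoints identified) onto the circle, and that $f$ is genuinely continuous as a circle map: its two branches agree modulo the identification $2\sim\tfrac13$ both at the seam and at $x=1$ (where the definition uses $f(1)=2$, which on the circle is the same point as $\tfrac13$). Since a topological conjugacy carries orbits to orbits and preserves return times, it suffices to prove the statement for $R_\alpha$ and an arbitrary arc $A$ of the circle; moreover, any interval of positive length contains an open subarc, so we may take $A$ open and it is enough to bound the return times to $A$.

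Next I would argue by compactness. For each integer $n\ge 0$ set $U_n=R_\alpha^{-n}(A)$; this is open since $R_\alpha$ is a homeomorphism. By the previous lemma the $R_\alpha$-orbit of every point of the circle is dense, so every point eventually lands in the open set $A$, i.e. every point lies in some $U_n$; thus $\{U_n\}_{n\ge 0}$ is an open cover of the compact circle. Extract a finite subcover $U_{n_1},\dots,U_{n_k}$ and put $N=1+\max_i n_i$. Then for every $\theta$ there is an index $i$ with $R_\alpha^{n_i}\theta\in A$, which means the orbit segment $\theta,R_\alpha\theta,\dots,R_\alpha^{N-1}\theta$ meets $A$. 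Equivalently, any window of $N$ consecutive iterates of $R_\alpha$ contains at least one visit to $A$, so two consecutive visits to $A$ are separated by at most $N$ iterations. Transporting this back through $\phi^{-1}$ gives the desired uniform bound on the number of iterations of $f$ between two consecutive occurrences in the original interval.

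The combinatorial part above is routine once the set-up is in place, so the two points that actually need care are: (i) verifying that $f$ really is conjugate to a bona fide rotation of the circle — in particular that nothing breaks at $x=1$ or at the seam, which is exactly the observation that $2$ and $\tfrac13$ are the same point on the circle built from $\phi$; and (ii) recording that the hypothesis "interval'' must be understood to have nonempty interior, since for a degenerate interval an irrational orbit may hit it at most once and there is no recurrence to bound. It is the nonempty interior that provides the open arc $A$ needed in the compactness argument.

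Alternatively, one could invoke the three-distance (Steinhaus) theorem, which asserts that the return times of an irrational rotation to a fixed arc take at most three distinct values; this yields the bound directly, and even quantitatively in terms of the length of the arc. I would still favour the compactness proof, as it needs no input beyond density of the orbits, which is already available from the preceding lemma.
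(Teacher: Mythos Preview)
Your argument is correct. Both you and the paper begin by passing through $\phi$ to the irrational rotation $R_\alpha$ on the circle, but from there the two proofs diverge.

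The paper proceeds constructively: by density of the orbit of the left endpoint $a$ of $I$, some iterate $R_\alpha^{N}$ is close to the identity, say $R_\alpha^{N}(a)=a+\beta$ with $0<\beta<|I|/2$. Since successive translations by $\beta$ cannot jump over an interval of length greater than $2\beta$, the sequence $x,\,x+\beta,\,x+2\beta,\dots$ must enter $I$ within $\lceil 1/\beta\rceil$ steps from any starting point, giving the explicit bound $N\lceil 1/\beta\rceil$ on the return time.

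You instead run an open-cover/compactness argument, which is the standard proof that a minimal homeomorphism of a compact metric space has uniformly bounded return times to every open set. This is cleaner and more general---it uses nothing about rotations beyond the orbit density already supplied by the preceding lemma---while the paper's route is more hands-on and in principle quantitative (the bound can be read off from the continued-fraction expansion of $\alpha$). Your mention of the three-distance theorem as a sharper alternative is also to the point.

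Your explicit checks that the two branches of $f$ agree on the circle at $x=1$ (where $2$ and $\tfrac13$ are identified) and that the interval must have nonempty interior are more careful than what the paper itself records.
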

\begin{proof}
Consider any interval $I = ]a;a+\alpha[$ in $[0;1] \mod 1$. As the orbits of $f$ are dense from starting point $a$, there exists $N$ such that $f^{N}(a) \mod 1$ is in $]a; a + \alpha / 2[$. Thus $f^{N}(a) = a + \beta$ with $\beta < \alpha / 2$. From any point $x$ in $I$, either $x+ \beta$ or $x - \beta$ is in $I$. Thus either $f^{N \lfloor 1 / \beta \rfloor}$ or $f^{N \lceil 1 / \beta \rceil}$ is in $I$.Hence our required bound on the number of iterations of $f$ is $N \lceil 1 / \beta \rceil$.
\end{proof}

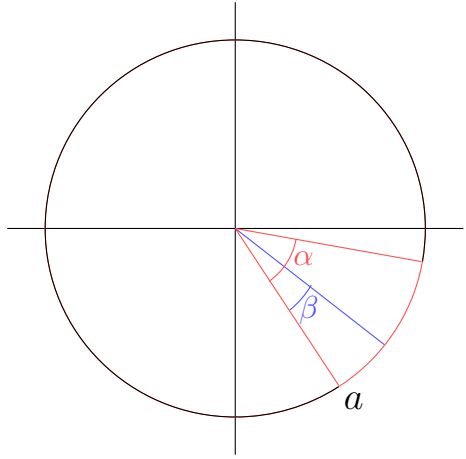
\begin{figure}
\begin{center}
\begin{tikzpicture}
\draw (-3,0) -- (3,0);
\draw (0,-3) -- (0,3);
\draw[red!70] (0,0) circle (2.5) ;
\draw (2.5,0) arc (0 : 303 : 2.5);
\draw (2.5,0) arc (0 : -10 : 2.5);
\draw[blue!70] (0,0) -- (1.96,-1.54);
\draw[blue!70] (1,-0.75) arc (-30 : -51 : 1.2) node[anchor = west]{\large{$\beta$}}  ;

\draw[red!70, thin] (0,0) -- (2.45,-0.44);
\draw[red!70, thin] (0,0) -- (1.36,-2.08);
\draw[red!70] (0.8,-0.14)  arc (-10 : -55 : 0.87)  ;
\draw[red!70] (0.9,-0.4) node{\large{$\alpha$}};
\draw (1.56,-2.28) node{\Large{$a$}};
\end{tikzpicture}
\end{center}
\label{fig:cross}
\caption{Crossing intervals}
\end{figure}

Now let us examine the colors that appear on any horizontal line of the tiling. On this line colors represent $0$ and $1$ or $1$ and $2$ ($0$ and $0'$ being interpreted as the same zero).

\begin{prop}
\label{prop:average}
Any horizontal line in any tiling has an average (in the sense of frequencies of numbers defined above). 
\end{prop}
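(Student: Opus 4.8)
The plan is to reduce the claim to the density of the irrational rotation that already appeared in the Lemma. Index the horizontal lines by $\bigZ$, with $L_0$ the one under consideration and $\ell_n(i)\in\{0,1,2\}$ the number carried by cell $i$ of $L_n$ (reading $0'$ as $0$). Each $L_n$ is a bi-infinite run of $M_2$ or of $M_{1/3}$, so finiteness of the state sets yields a universal constant $K$ and, for each $n$, a factor $\lambda_n\in\{2,\tfrac13\}$ (recording which automaton the stripe between $L_n$ and $L_{n+1}$ runs) with $\bigl|\sum_{i=m}^{p}\bigl(\ell_{n+1}(i)-\lambda_n\ell_n(i)\bigr)\bigr|\le K$ for all $m\le p$. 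Dividing by the window length, the window-averages $A^{(n)}_{[m,p]}$ obey $\bigl|A^{(n+1)}_{[m,p]}-\lambda_nA^{(n)}_{[m,p]}\bigr|\le K/(p-m+1)$. Let $V_n$ be the set of limit points of $A^{(n)}_{[m,p]}$ as $p-m\to\infty$; it is nonempty and bounded, and the inequality gives $V_{n+1}=\lambda_nV_n$, hence $V_n=\bigl(\prod_{j<n}\lambda_j\bigr)V_0$. The line $L_0$ has an average exactly when $V_0$ is a singleton, so I would assume $|V_0|\ge2$ and look for a contradiction.

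I would next pin down where each $V_n$ lives, using that $L_n$ is the bottom row of the stripe above it. If that stripe runs $M_{1/3}$ (i.e. $\lambda_n=\tfrac13$) its bottom alphabet is $\{1,2\}$, so $V_n\subseteq[1,2]$. If it runs $M_2$ ($\lambda_n=2$) the bottom alphabet is $\{0,1\}$, so $V_n\subseteq[0,1]$, and moreover runs of $0$'s are bounded in a genuine tiling (a long one would force a forbidden pattern such as $0'0'$ on $L_{n+1}$, which cannot occur in a bottom row), so window-averages stay above a fixed $c>\tfrac16$ and $V_n\subseteq[c,1]$. Thus $V_n\subseteq[c,2]$ for every $n$, and — the decisive point — $V_n$ lies entirely on one side of $1$: in $[c,1]$ when $\lambda_n=2$, in $[1,2]$ when $\lambda_n=\tfrac13$.

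Now I would push everything through the map $\phi$ of the Lemma. Since $\phi$ sends both $x\mapsto 2x$ and $x\mapsto x/3$ to the rotation $R_\alpha$ by the irrational $\alpha=\tfrac{\log2}{\log2+\log3}$, we get $\phi(V_{n+1})=R_\alpha\bigl(\phi(V_n)\bigr)$ and hence $\phi(V_n)=R_\alpha^{\,n}\bigl(\phi(V_0)\bigr)$ for all $n$. By the previous step each $\phi(V_n)$ is contained in one of the two arcs $J_2=\phi([c,1])$ or $J_{1/3}=\phi([1,2])$, and since $c>\tfrac16$ these are proper sub-arcs of the circle sharing only the point $1-\alpha=\phi(1)$. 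Pick two distinct points of $\phi(V_0)$ at circle-distance $d>0$; then $R_\alpha^{\,n}\{\cdot,\cdot\}$ is a pair of fixed diameter $d$ whose midpoint equidistributes, so for a suitable $n$ this midpoint falls within $\tfrac d2$ of $1-\alpha$, which makes the pair straddle $1-\alpha$ and therefore lie in neither $J_2$ nor $J_{1/3}$ — contradicting that $\phi(V_n)$ lies in one of them. Hence $|V_0|=1$, so $L_0$ has an average; since $L_0$ was arbitrary, every horizontal line does.

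The dynamical core — that an irrational rotation cannot keep a fixed nondegenerate pair of points inside a fixed union of two proper arcs — is soft; the two steps I expect to need the most care are (i) the combinatorial bound of the second paragraph, namely that runs of $0$'s along the bottom of an $M_2$-stripe are bounded in any tiling of the plane, which is exactly what keeps every line's window-averages uniformly above $\tfrac16$ and so lands the $\phi$-images in honest sub-arcs rather than in the whole circle; and (ii) a few degenerate configurations of the two chosen points (e.g. nearly antipodal, so that the straddling step is not immediate), which are disposed of by noting that such a configuration would force every stripe to run $M_2$, whence $V_n=2^{n}V_0$ would be unbounded, absurd. Modulo these points, the near-multiplicativity of window-averages and the conjugacy of $f$ to the rotation do the rest.
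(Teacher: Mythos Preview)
Your strategy is the paper's: each line is over $\{0,1\}$ or $\{1,2\}$, so its window-averages sit on one side of $1$; two distinct limit values, carried to other lines by the automata, must eventually straddle $1$, which is impossible. The paper compresses this into ``with enough runs of the automata, one subpattern will have average less than $1$ and the other greater than $1$'' and stops; you unpack the mechanism via the conjugacy $\phi$ to the rotation $R_\alpha$, which is the right way to make that step rigorous.

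One claim needs repair. That $J_2=\phi([c,1])$ and $J_{1/3}=\phi([1,2])$ ``share only the point $1-\alpha$'' is true only for $c\ge\tfrac13$; for $c\in(\tfrac16,\tfrac13)$ the arc $J_2$ wraps past $0$ and overlaps $J_{1/3}$ on an arc of positive length, so a pair straddling $1-\alpha$ can still lie entirely inside $J_2$. Your $0'0'$-argument bounds $0$-runs by $3$, which gives only $c\ge\tfrac14$, and your degenerate-case disposal (``force every stripe to run $M_2$'') handles only $d>\alpha$, leaving the range $d\in\bigl(\tfrac{\log(6c)}{\log 6},\alpha\bigr]$ uncovered. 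The fix is shorter than your step (i): no combinatorial lower bound on averages is needed. Since $|V_0|\ge 2$ already excludes all-$M_2$ (that would force $V_0=\{0\}$), some stripe runs $M_{1/3}$; immediately after it $V\subseteq[\tfrac13,\tfrac23]$, and a one-line induction (if $V_m\subseteq[\tfrac13,2]$ then $V_{m+1}\subseteq[\tfrac23,2]$ or $[\tfrac13,\tfrac23]$) keeps all later $V_m$ inside $[\tfrac13,2]$. Restart your rotation argument from that line with $c=\tfrac13$: now the two arcs really meet only at $1-\alpha$, and the straddling step goes through cleanly.
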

\begin{proof}
Our proof is based on the remark that on each line we have either $0$ and $1$ or $1$ and $2$ but never $0$ and $2$.

The average of any segment has a value in $[0;2]$ which is a compact set. Consider two non overlapping growing sequences of subpatterns. Suppose their averages have different limits. Then we can take subpatterns of different averages as large as we want. With enough runs of the automata, one of this subpattern will have average less than $1$ and the other, greater than $1$, on the same line. But no automaton can read such a line, which makes a contradiction.  

\end{proof}
Remark that Kari's basic idea is that the averages of the lines obey the function $f$.

We prove below that a specific family of patterns appears dense in our tiling.
The following lemma gives us the horizontal density, and the vertical density is obtained by combination of this lemma and Proposition~\ref{prop:dense2}.

\begin{lemma}
The family of patterns $\{01^\alpha0| \alpha > 3 \}$ appear with positive density in each line that has density in $]\frac{4}{5}; \frac{9}{10}[$.
\end{lemma}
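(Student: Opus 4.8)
The plan is to argue entirely on the bi-infinite sequence $u=(u_k)_{k\in\bigZ}$ over $\{0,1\}$ carried by the line (identifying $0'$ with $0$), using only that its frequency of $1$'s is $d$ with $\tfrac45<d<\tfrac9{10}$, together with a count on maximal runs. The first step is to record how good this frequency is: the proof of Proposition~\ref{prop:average} in fact gives \emph{uniformity} of the frequency — for every $\varepsilon>0$, every long enough factor of $u$ has $1$-frequency within $\varepsilon$ of $d$. Indeed, two long factors of $u$ with distinct averages would, after enough upward steps (each one turning the frequency of a region into its image under $f$), have averages on opposite sides of $1$ — the conjugacy of $f$ to an irrational rotation, used to prove that the orbits of $f$ are dense, also shows that two distinct orbits land on opposite sides of $1$ infinitely often — and a line with a region of average $>1$ and a region of average $<1$ must carry both a $0$ and a $2$, which is impossible. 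Two consequences will be used: $u$ has no infinite run of $1$'s, and there is a constant $N_0$ bounding the length of every maximal run of $1$'s (a window of length $>N_0$ is forced to contain a $0$). Thus $u$ decomposes into alternating maximal $0$-runs and maximal $1$-runs, and a maximal $1$-run of length $\alpha$ is exactly the interior of an occurrence of $01^\alpha0$.

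The second step is a count inside a window $W=[-m,m]$, $m\to\infty$. Let $z$ be the number of $0$'s of $W$, $r$ the number of maximal $1$-runs meeting $W$, and $s$ the number of $1$'s of $W$ that lie in maximal runs of length at most $3$. Uniformity gives $z=(1-d)|W|+o(|W|)$ and a total of $d|W|+o(|W|)$ ones in $W$. Since maximal $0$-runs and maximal $1$-runs alternate along $u$, $r$ is at most one more than the number of maximal $0$-runs meeting $W$, which is at most $z$; and a short run contributes at most $3$ ones, so $s\le 3r\le 3(1-d)|W|+o(|W|)$. Hence the number of $1$'s of $W$ lying in maximal runs of length $>3$ is at least
\[
 d|W|-s-o(|W|)\ \ge\ \bigl(d-3(1-d)\bigr)|W|-o(|W|)=(4d-3)|W|-o(|W|).
\]
As $d>\tfrac45>\tfrac34$ we have $4d-3>\tfrac15>0$, so a proportion at least $4d-3$ of the positions of $u$ is covered by maximal $1$-runs of length $>3$ — equivalently by occurrences of patterns from $\{01^\alpha0\mid\alpha>3\}$. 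Since each such run has length between $4$ and $N_0$, at least $\tfrac{4d-3}{N_0}|W|-o(|W|)$ of them meet $W$, so patterns of the family occur on the line with density at least $\tfrac{4d-3}{N_0}>0$, which is the claim. (The bound $d<\tfrac9{10}$ plays no role here; it is needed elsewhere, e.g.\ to keep the line frequency inside $[\tfrac13;2]$ so Proposition~\ref{prop:dense2} applies, and to keep $0$'s frequent.)

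The main obstacle, and the step I would write out most carefully, is the first one: the uniformity of the line frequency and the resulting bound $N_0$ on runs of $1$'s. This is the only point where the tiling enters rather than pure word combinatorics — the statement of Proposition~\ref{prop:average} asserts only existence of the average, and one has to extract from its argument the uniform version (hence also that arbitrarily long $1$-runs are impossible). After that, the remainder is the elementary count displayed above.
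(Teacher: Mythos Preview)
Your proof is correct and is essentially a careful expansion of the paper's two-line argument, which simply asserts that $1111$ has positive density when $d>\tfrac45$ and that $0$ has positive density when $d<\tfrac9{10}$, leaving the combination implicit. You supply precisely what the paper glosses over --- that the proof of Proposition~\ref{prop:average} actually gives \emph{uniform} convergence of window averages (not merely existence of the limit), hence a uniform bound $N_0$ on maximal $1$-runs, which is what converts ``long runs cover a positive proportion of positions'' into ``occurrences of $01^\alpha0$ have positive density''; your direct count of $1$'s in long runs is equivalent to the paper's detour through the density of $1111$. One small inconsistency in your closing parenthetical: the upper bound on $d$ \emph{does} play a role in your own argument, since $d<1$ is exactly what produces $N_0$; what is true is only that the particular threshold $\tfrac9{10}$ could be replaced by any value strictly below~$1$.
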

\begin{proof}
On each line with density greater than $34 / 5$ the pattern $1111$ must appear with positive density. 
On each line with density less than $9/10$, $0$ appear with a positive density, otherwise we would have a contradiction with Proposition~\ref{prop:average}.

\end{proof}

We now have a family of linear patterns that appear in a dense way in our tiling. Let us prove that each time one of this patterns appear, one element of our subsitutive pairs appear.

Let us consider the two lines above this pattern. Above the $0$'s will always be $1$'s. Above the $1$'s there will be $2$'s, expect for one $1$.
 We then distinguish three cases depending on the position of the $1$ in the block of $2$'s : the leftmost, somewhere in the middle or the rightmost. 
The line above this block of two is the result of a division by $3$. This operation is deterministic, there are a priori three posibilities for the phase of the carry on the whole line : $0$, $1/3$, $2/3$.

In the middle case (Figure~\ref{fig:cases}), the three cases for the phase of the carry are possible. In each of those cases one element of the substitutive pairs appears, either with the bottomleft or the bottomright tile being the apparition of the $1$ in the block of $2$'s. 

In the two other cases (leftmost and rightmost) we have two consecutive ones. This prevents the appearence of one of the phases (one can check that block $0011$ cannot be continued, drown in red in the pictures). 

In the leftmost case (Figure~\ref{fig:cases}), only two of the three posibilities for the phase of the carry may appear. In both cases, one element of the substitutive pairs appear above the two leftmost $1$'s of the first line. 

In the rightmost case (Figure~\ref{fig:cases}), only two of the three phases of the carry are possible. In both cases, one element of the substitutive pairs appear above the two rightmost $1$'s of the first line.

The colored vertical bar correspond to the code of bits presented in figure~\ref{fig:tileset}.

\begin{figure}[!]

\centering
    \begin{tikzpicture}[scale=0.6]
\begin{scope}
      % Colored zones and red thick lines
      \fill[color=blue!30]    (2.5,0) rectangle ++(2,2);
	\draw[color = red, very thick] (2.5,1) --++ (0,1);
      % Grid
      \draw[dotted] (-1,0.5) -- ++(8,0);
	\draw[dotted] (-1,1.5) -- ++(8,0);
      \foreach \i in {0,...,5} {
        \draw[dotted] (\i,-0.5) ++ (0.5,0) -- ++(0,3);
      }

      % Digits
      \draw (0,0) node {$0'$} 
          ++(1,0) node {$\dots$}
          ++(1,0) node {$1$}
          ++(1,0) node {$1$}
          ++(1,0) node {$1$}
          ++(1,0) node {$\dots$}
          ++(1,0) node {$0'$};
      \draw (0,1) node {$1$}
          ++(1,0) node {$\dots$}
          ++(1,0) node {$2$}
          ++(1,0) node {$1$}
          ++(1,0) node {$2$}
          ++(1,0) node {$\dots$}
          ++(1,0) node {$1$};
\draw (0,2) node {$\dots$} 
          ++(1,0) node {$\dots$}
          ++(1,0) node {$0$}
          ++(1,0) node {$1$}
          ++(1,0) node {$0$}
          ++(1,0) node {$\dots$}
          ++(1,0) node {$\dots$};
\end{scope}
\begin{scope}[xshift=9cm]
      % Colored zones and red thick lines
      \fill[color=blue!30]    (1.5,0) rectangle ++(2,2);
      \draw[color = green, very thick] (2.5,1) --++ (0,1);
	% Grid
      \draw[dotted] (-1,0.5) -- ++(8,0);
	\draw[dotted] (-1,1.5) -- ++(8,0);
      \foreach \i in {0,...,5} {
        \draw[dotted] (\i,-0.5) ++ (0.5,0) -- ++(0,3);
      }

      % Digits
            \draw (0,0) node {$0'$} 
          ++(1,0) node {$\dots$}
          ++(1,0) node {$1$}
          ++(1,0) node {$1$}
          ++(1,0) node {$1$}
          ++(1,0) node {$\dots$}
          ++(1,0) node {$0'$};
      \draw (0,1) node {$1$}
          ++(1,0) node {$\dots$}
          ++(1,0) node {$2$}
          ++(1,0) node {$1$}
          ++(1,0) node {$2$}
          ++(1,0) node {$\dots$}
          ++(1,0) node {$1$};
\draw (0,2) node {$\dots$} 
          ++(1,0) node {$\dots$}
          ++(1,0) node {$1$}
          ++(1,0) node {$0$}
          ++(1,0) node {$1$}
          ++(1,0) node {$\dots$}
          ++(1,0) node {$\dots$};

\end{scope}

\begin{scope}[xshift=18cm]
      % Colored zones and red thick lines
      \fill[color=blue!30]    (2.5,0) rectangle ++(2,2);
	\draw[color = blue, very thick] (2.5,1) --++ (0,1);
      % Grid
      \draw[dotted] (-1,0.5) -- ++(8,0);
	\draw[dotted] (-1,1.5) -- ++(8,0);
      \foreach \i in {0,...,5} {
        \draw[dotted] (\i,-0.5) ++ (0.5,0) -- ++(0,3);
      }

      % Digits
           \draw (0,0) node {$0'$} 
          ++(1,0) node {$\dots$}
          ++(1,0) node {$1$}
          ++(1,0) node {$1$}
          ++(1,0) node {$1$}
          ++(1,0) node {$\dots$}
          ++(1,0) node {$0'$};
      \draw (0,1) node {$1$}
          ++(1,0) node {$\dots$}
          ++(1,0) node {$2$}
          ++(1,0) node {$1$}
          ++(1,0) node {$2$}
          ++(1,0) node {$\dots$}
          ++(1,0) node {$1$};
\draw (0,2) node {$\dots$} 
          ++(1,0) node {$\dots$}
          ++(1,0) node {$1$}
          ++(1,0) node {$0$}
          ++(1,0) node {$1$}
          ++(1,0) node {$\dots$}
          ++(1,0) node {$\dots$};
\end{scope}
 \draw (12, -1.5) node{Middle case};
   \end{tikzpicture}

\vspace{0.5cm}

\centering
    \begin{tikzpicture}[scale=0.6]
\begin{scope}
      % Colored zones and red thick lines
      \fill[color=blue!30]    (0.5,0) rectangle ++(2,2);
	\draw[color = red, very thick] (-0.5,1) --++ (0,1);
      % Grid
      \draw[dotted] (-1,0.5) -- ++(8,0);
	\draw[dotted] (-1,1.5) -- ++(8,0);
      \foreach \i in {0,...,5} {
        \draw[dotted] (\i,-0.5) ++ (0.5,0) -- ++(0,3);
      }

      % Digits
      \draw (0,0) node {$0'$} 
          ++(1,0) node {$1$}
          ++(1,0) node {$1$}
          ++(1,0) node {$1$}
          ++(1,0) node {$1$}
          ++(1,0) node {$\dots$}
          ++(1,0) node {$0'$};
      \draw (0,1) node {$1$}
          ++(1,0) node {$1$}
          ++(1,0) node {$2$}
          ++(1,0) node {$2$}
          ++(1,0) node {$2$}
          ++(1,0) node {$\dots$}
          ++(1,0) node {$1$};
\draw (0,2) node {$1$} 
          ++(1,0) node {$0$}
          ++(1,0) node {$1$}
          ++(1,0) node {$0$}
          ++(1,0) node {$1$}
          ++(1,0) node {$\dots$}
          ++(1,0) node {$\dots$};
\end{scope}
\begin{scope}[xshift=9cm]
      % Colored zones and red thick lines
      \fill[color=blue!30]    (0.5,0) rectangle ++(2,2);
	\draw[color = green, very thick] (-0.5,1) --++ (0,1);
      % Grid
      \draw[dotted] (-1,0.5) -- ++(8,0);
	\draw[dotted] (-1,1.5) -- ++(8,0);
      \foreach \i in {0,...,5} {
        \draw[dotted] (\i,-0.5) ++ (0.5,0) -- ++(0,3);
      }

      % Digits
      \draw (0,0) node {$0'$} 
          ++(1,0) node {$1$}
          ++(1,0) node {$1$}
          ++(1,0) node {$1$}
          ++(1,0) node {$1$}
          ++(1,0) node {$\dots$}
          ++(1,0) node {$0'$};
      \draw (0,1) node {$1$}
          ++(1,0) node {$1$}
          ++(1,0) node {$2$}
          ++(1,0) node {$2$}
          ++(1,0) node {$2$}
          ++(1,0) node {$\dots$}
          ++(1,0) node {$1$};
\draw (0,2) node {$0$} 
          ++(1,0) node {$1$}
          ++(1,0) node {$0$}
          ++(1,0) node {$1$}
          ++(1,0) node {$1$}
          ++(1,0) node {$\dots$}
          ++(1,0) node {$\dots$};
\end{scope}
\begin{scope}[xshift=18cm]
      % Colored zones and red thick lines
      \fill[color=red!30]    (-0.5,1.5) rectangle ++(4,1);
	\draw[color = blue, very thick] (-0.5,1) --++ (0,1);
      % Grid
      \draw[dotted] (-1,0.5) -- ++(8,0);
	\draw[dotted] (-1,1.5) -- ++(8,0);
      \foreach \i in {0,...,5} {
        \draw[dotted] (\i,-0.5) ++ (0.5,0) -- ++(0,3);
      }

      % Digits
      \draw (0,0) node {$0'$} 
          ++(1,0) node {$1$}
          ++(1,0) node {$1$}
          ++(1,0) node {$1$}
          ++(1,0) node {$1$}
          ++(1,0) node {$\dots$}
          ++(1,0) node {$0'$};
      \draw (0,1) node {$1$}
          ++(1,0) node {$1$}
          ++(1,0) node {$2$}
          ++(1,0) node {$2$}
          ++(1,0) node {$2$}
          ++(1,0) node {$\dots$}
          ++(1,0) node {$1$};
\draw (0,2) node {$0$} 
          ++(1,0) node {$0$}
          ++(1,0) node {$1$}
          ++(1,0) node {$1$}
          ++(1,0) node {$0$}
          ++(1,0) node {$\dots$}
          ++(1,0) node {$\dots$};

\end{scope}
\draw (12, -1.5) node{Leftmost case};
    \end{tikzpicture}

\vspace{0.5cm}

\centering
    \begin{tikzpicture}[scale=0.6]
\begin{scope}
      % Colored zones and red thick lines
      \fill[color=blue!30]    (3.5,0) rectangle ++(2,2);
	\draw[color = red, very thick] (3.5,1) --++ (0,1);
      % Grid
      \draw[dotted] (-1,0.5) -- ++(8,0);
	\draw[dotted] (-1,1.5) -- ++(8,0);
      \foreach \i in {0,...,5} {
        \draw[dotted] (\i,-0.5) ++ (0.5,0) -- ++(0,3);
      }

      % Digits
      \draw (0,0) node {$0'$} 
          ++(1,0) node {$\dots$}
          ++(1,0) node {$1$}
          ++(1,0) node {$1$}
          ++(1,0) node {$1$}
          ++(1,0) node {$1$}
          ++(1,0) node {$0'$};
      \draw (0,1) node {$1$}
          ++(1,0) node {$\dots$}
          ++(1,0) node {$2$}
          ++(1,0) node {$2$}
          ++(1,0) node {$2$}
          ++(1,0) node {$1$}
          ++(1,0) node {$1$};
\draw (0,2) node {$\dots$} 
          ++(1,0) node {$\dots$}
          ++(1,0) node {$\dots$}
          ++(1,0) node {$0$}
          ++(1,0) node {$1$}
          ++(1,0) node {$0$}
          ++(1,0) node {$1$};
\end{scope}
\begin{scope}[xshift=9cm]
      % Colored zones and red thick lines
      \fill[color=red!30]    (2.5,1.5) rectangle ++(4,1);
	\draw[color = green, very thick] (3.5,1) --++ (0,1);
      % Grid
      \draw[dotted] (-1,0.5) -- ++(8,0);
	\draw[dotted] (-1,1.5) -- ++(8,0);
      \foreach \i in {0,...,5} {
        \draw[dotted] (\i,-0.5) ++ (0.5,0) -- ++(0,3);
      }

      % Digits
           \draw (0,0) node {$0'$} 
          ++(1,0) node {$\dots$}
          ++(1,0) node {$1$}
          ++(1,0) node {$1$}
          ++(1,0) node {$1$}
          ++(1,0) node {$1$}
          ++(1,0) node {$0'$};
      \draw (0,1) node {$1$}
          ++(1,0) node {$\dots$}
          ++(1,0) node {$2$}
          ++(1,0) node {$2$}
          ++(1,0) node {$2$}
          ++(1,0) node {$1$}
          ++(1,0) node {$1$};
\draw (0,2) node {$\dots$} 
          ++(1,0) node {$\dots$}
          ++(1,0) node {$\dots$}
          ++(1,0) node {$1$}
          ++(1,0) node {$1$}
          ++(1,0) node {$0$}
          ++(1,0) node {$0$};
\end{scope}
\begin{scope}[xshift=18cm]
      % Colored zones and red thick lines
      \fill[color=blue!30]    (3.5,0) rectangle ++(2,2);
	\draw[color = blue, very thick] (3.5,1) --++ (0,1);
      % Grid
      \draw[dotted] (-1,0.5) -- ++(8,0);
	\draw[dotted] (-1,1.5) -- ++(8,0);
      \foreach \i in {0,...,5} {
        \draw[dotted] (\i,-0.5) ++ (0.5,0) -- ++(0,3);
      }

      % Digits
           \draw (0,0) node {$0'$} 
          ++(1,0) node {$\dots$}
          ++(1,0) node {$1$}
          ++(1,0) node {$1$}
          ++(1,0) node {$1$}
          ++(1,0) node {$1$}
          ++(1,0) node {$0'$};
      \draw (0,1) node {$1$}
          ++(1,0) node {$\dots$}
          ++(1,0) node {$2$}
          ++(1,0) node {$2$}
          ++(1,0) node {$2$}
          ++(1,0) node {$1$}
          ++(1,0) node {$1$};
\draw (0,2) node {$\dots$} 
          ++(1,0) node {$\dots$}
          ++(1,0) node {$\dots$}
          ++(1,0) node {$1$}
          ++(1,0) node {$0$}
          ++(1,0) node {$1$}
          ++(1,0) node {$0$};

\end{scope}
\draw (12, -1.5) node{Rightmost case};

    \end{tikzpicture}

\caption{Our case analysis}
\label{fig:cases}
\end{figure}
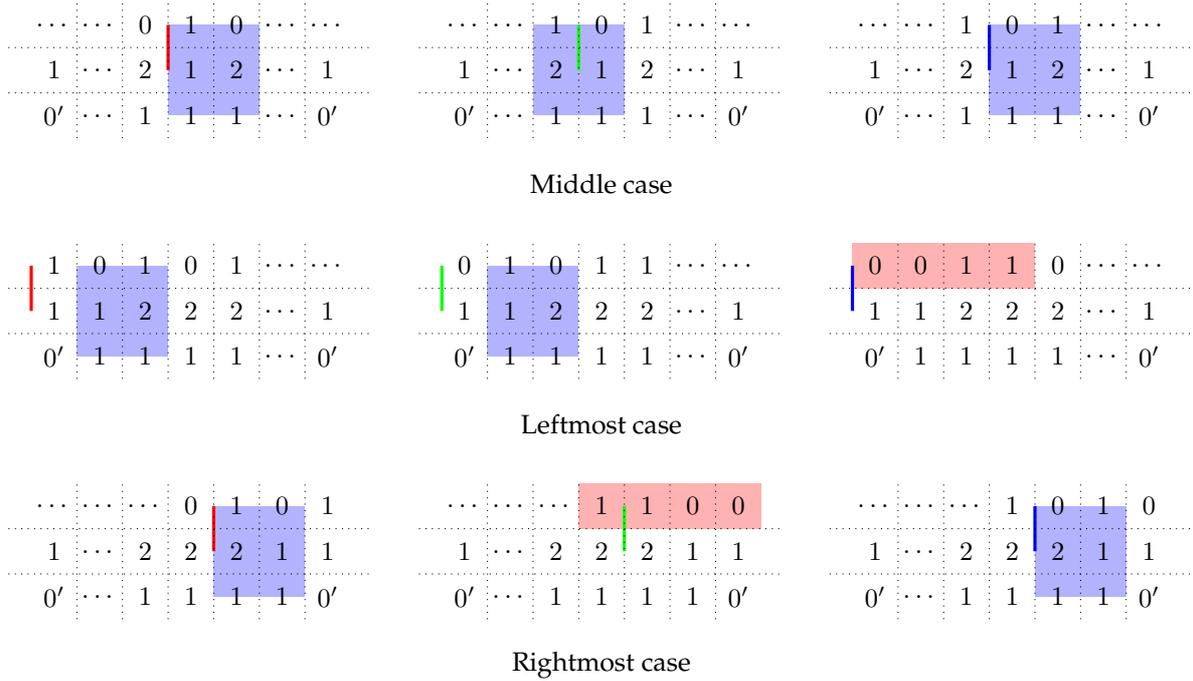

\begin{theorem}
The Kari-Culik tileset have positive entropy.
\end{theorem}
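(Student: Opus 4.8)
The plan is to convert the structure proved above into a lower bound of the form $C_{\mathcal S}(n)\ge 2^{cn^2}$ for a universal constant $c>0$; since the limit defining $H(\mathcal S)$ exists, this at once gives $H(\mathcal S)\ge c\log 2>0$. The mechanism is the usual one for substitutive pairs: inside a large window one locates many occurrences of the pairs $A_1/A_1'$ and $A_2/A_2'$ at pairwise disjoint positions and flips each of them independently; because in each pair the two $2\times 2$ patches carry the same colours on all four sides, every choice of flips yields a valid tiling, and different choices give different windows, so $C_{\mathcal S}(n)$ is at least $2$ to the power of the number of disjoint patches. Thus the whole argument is a density-and-disjointness bookkeeping built on the earlier propositions.

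First I would fix any valid tiling (these exist by Section~\ref{sec:seqs_tilings}). By Proposition~\ref{prop:average} every horizontal line has an average, and these averages obey Kari's function $f$ from one line to the next; combining the density of the orbits of $f$ with Proposition~\ref{prop:dense2} applied to $I=\,]\tfrac45;\tfrac9{10}[$ yields a universal constant $N_I$ such that, above the first line with average in $I$, the lines with average in $I$ (the \emph{good} lines) recur with vertical gap at most $N_I$; thinning this set, I keep good lines that are moreover pairwise at vertical distance $\ge 3$, still with bounded gaps. On each good line the preceding Lemma gives a fixed positive proportion of positions carrying a pattern $0\,1^\alpha 0$ with $\alpha>3$, and since each such pattern occupies at least six consecutive columns a second thinning keeps a positive proportion of them that are pairwise horizontally far apart. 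Finally, the case analysis of Figure~\ref{fig:cases} — the isolated $1$ inside the block of $2$'s above the pattern being leftmost, central, or rightmost, together with the admissible phases of the division-by-$3$ carry (all three in the central case, only two in each extreme case, the third being excluded because it would force the un-continuable word $0011$ drawn in red) — shows that above every retained occurrence the two lines directly on top contain one of $A_1,A_1',A_2,A_2'$.

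Now place an $n\times n$ window $W$ above the first good line. It meets at least $c_v n$ retained good lines whose two upper neighbours also lie in $W$, and on each of them at least $c_h n-O(1)$ retained occurrences of $0\,1^\alpha 0$, hence at least $cn^2-O(n)$ substitutive patches with $c=c_vc_h/O(1)>0$ universal. These patches are pairwise disjoint in $W$: horizontally by the column-thinning, vertically because a patch attached to line $m$ only meets lines $m+1,m+2$ while the retained good lines are $\ge 3$ apart. Each patch can therefore be switched independently of all the others, and the switch is local — it changes only those four tiles, altering neither the pattern $0\,1^\alpha 0$ below it nor any neighbouring patch — so one obtains at least $2^{cn^2-O(n)}$ pairwise distinct $n\times n$ patterns occurring in valid tilings. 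Consequently $C_{\mathcal S}(n)\ge 2^{cn^2-O(n)}$, and $H(\mathcal S)=\lim_n \tfrac{\log C_{\mathcal S}(n)}{n^2}\ge c\log 2>0$.

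The routine ingredients — existence of tilings, the frequency count behind the Lemma, the conjugacy of $f$ to an irrational rotation, and the finite verification of the four substitutive squares — are already in place, so the only genuinely delicate point is the bookkeeping of the third paragraph: the horizontal thinning on each line and the vertical thinning of good lines must be carried out compatibly with the $O(1)$ constants from Proposition~\ref{prop:dense2} and from the Lemma so that the final count stays $\Theta(n^2)$, and, above all, so that no two of the independent switches ever overlap — which is precisely why it is comfortable that the substitutive pairs are $2\times 2$ blocks and that the triggering patterns $0\,1^\alpha 0$ are widely spaced on each good line.
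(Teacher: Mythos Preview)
Your proposal is correct and follows exactly the route the paper intends: the paper's own proof is a one-line ``combine the previous results'' (substitutive pairs plus their density from Proposition~\ref{prop:dense2} and the Lemma plus the case analysis of Figure~\ref{fig:cases}), and you have supplied the bookkeeping the paper leaves implicit --- the vertical thinning of good lines, the horizontal thinning of the $0\,1^\alpha 0$ occurrences, the disjointness of the resulting $2\times2$ patches, and the final $2^{cn^2}$ count. One small inaccuracy: the substitutive block sits on the good line itself and the line just above (rows $m,m+1$), not on rows $m+1,m+2$; this does not affect your disjointness argument, since spacing the retained good lines by at least $3$ still separates the $2$-row patches.
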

\begin{proof}
The theorem is a consequence of all the other resuts of this section : we have presented two substitutive pairs that together appear in a dense way in any tiling of the plane.
\end{proof}

\subsection{Open problems}

We proved that the two pairs are together dense in any tiling. Is one of those pairs dense alone in a given tiling? 

Consider the extended tileset where we forbid one pattern in each of the presented pairs. The obtained tileset is still a palette. Has this tileset a positive entropy? If the answer is positive, is it possible to exclude a finite number of patterns so that all the resulting tilings have zero entropy?

Using substitutive pairs we proved that there are tilings which horizontal lines do not represent mechanical words.
Is it possible to better characterize the Kari-words : the language of the lines that can appear in a tiling ?

\section{Choices, "cylindricity" and entropy}
\label{sec:choices}

Positive entropy is a very rough method for understanding the quantity of choice that you meet when you effectively construct a tiling of the plane. Imagine that you walk over the plane in spiral trajectory, placing one matching tile after another. Lets mark in red the cells were you have a "real choice" \emph{i.e.} were you have at least two possibilities that can continue to an infinite tiling of the plane. If the set of red cells is dense then the entropy of the tileset is positive. If you use this approach on selfsimilar tilesets as usually constructed, then the red cells are exponentially rare : as soon as you fix a tile on this cell you impose the next level structure and the size of the such determined areas grows exponentially. 
From the original construction of Kari~\cite{Kari} it is clear that there are horizontal lines where the density of red points is constant (because of underliyng mechanical words). Nevertheless, even if this makes a difference between self similar tilesets and Kari's, this does not prove positive entropy, furthermore it was conjectured that this freedom in representation of mechanical words of same density is strongly coupled.

Another refined version of the entropy appraoch was presented by Thierry Monteil in ~\cite{thierry} using the notion of cylindricity.
We explain below how this is related with our work. 

Consider a vertical cylinder of size $n$. If you can tile this cylinder with a tileset then two of the horizontal rings are identical, thus one can tile a torus which correspond to a periodic tiling of the plane. If the tileset is aperiodic, for each $n$ there exists a maximal vertical size for a portion of the cylinder to be tilable. The smallest growing function greater than this vertical size is called the \emph{cylindricity} function of the tileset.

Consider any self similar tileset (for instance use the generic approach of Nicolas Ollinger in~\cite{ollinger} for generating a Wang tileset from a  substitution). If one can tile a cylinder of given size, then we can rewrite all the tiles into blocks and thus obtain a larger cylinder with about the same proportions: depending on the proportion between horizontal and vertical factors. The cylindricity function is greater than $x^{\alpha}$ with $\alpha$ positive.  

Remark that the cylindricity is always greater than a logarithm : consider a tiling of the plane and a vertical segment of size $n$ in an horizontal stripe. The minimal distance for seeing twice the same vertical segment is bounded by an exponential in $n$ (because the tileset is finite). 

It would be interesting to study this function for more sophisticated tilesets, for instance the most complicated one ~\cite{Complextilings} or the robust to errors version in ~\cite{DBLP:journals/jcss/DurandRS12}. 
In Kari's tiling if you have a periodic configuration then its image after a few steps will have a period three times larger because when we divide by three, consecutive periods assume different carry phases. Note that the $\times2$ operation cannot diminish the period. Thus if we have a cylinder of length $n$ and height $h$, then the period of the first line is at most $(n/3)^{\alpha h}$ where $\alpha$ is a constant (some easy technical adjustments are needed to transform this argument into a complete proof).

From this result it was conjectured in~\cite{thierry} that the logarithm of number of patterns of size $n\times n$ was of order $n$. This would have produced an entropy zero tiling with strictly more choices than for self similar case. But it is not the case. We proved that the bound given by the cylindricity appraoch is not tight.

\section{Acknowledgments}
The authors thank Alexander Shen for his help in stating in a clear way above results.
\nocite{*}
\bibliographystyle{plain}
\bibliography{biblikari}

\end{document}